\newtheorem{theorem}{Theorem}
\newtheorem{lemma}[theorem]{Lemma}
\newtheorem{corollary}[theorem]{Corollary}
\newtheorem{question}{Question}
\tikzstyle{vertex}=[circle, draw, fill=black!50,
\tikzset{->-/.style={decoration={
  markings,
  mark=at position .5 with {\arrow{>}}},postaction={decorate}}}
\newcommand\efface[1]{}
\title{The Coloring Game on Planar Graphs with Large Girth, \\ by a result on Sparse Cactuses}
\author{Clément Charpentier\footnote{This research is supported by the ANR project GAG (Games and graphs), ANR-14-CE25-0006, 2015-2018.}
 \mbox{}\\
 {\small Institut Fourier, Université Joseph Fourier, UMR 5582, Grenoble}\\
 {\small Maths à Modeler}
} 
\date{\today} 
\begin{document}

\maketitle

\begin{abstract}
	We denote by $\chi_g(G)$ the \emph{game chromatic number} of a graph $G$, which is the smallest number of colors Alice needs to win the \emph{coloring game} on $G$. We know from Montassier et al. [M. Montassier, P. Ossona de Mendez, A. Raspaud and X. Zhu, \emph{Decomposing a graph into forests}, J. Graph Theory Ser. B, 102(1):38-52, 2012] and, independantly, from Wang and Zhang, [Y. Wang and Q. Zhang. \emph{Decomposing a planar graph with girth at least 8 into a forest and a matching}, Discrete Maths, 311:844-849, 2011] that planar graphs with girth at least 8 have game chromatic number at most 5. 

	One can ask if this bound of 5 can be improved for a sufficiently large girth. In this paper, we prove that it cannot. More than that, we prove that there are \emph{cactuses} $CT$ (i.e. graphs whose edges only belong to at most one cycle each) having $\chi_g(CT) = 5$ despite having arbitrary large girth, and even arbitrary large distance between its cycles.
\end{abstract}

\section{Introduction}

We only consider in this paper simple, finite, and undirected graphs. The \emph{length} of a path or cycle is the cardinal of its edge-set. The \emph{girth} $g(G)$ of a graph $G$ is the length of its smallest cycle. A \emph{cactus} is a graph $G$ in which any edge belongs to at most one cycle. The \emph{cycle-distance} of a cactus is the length of its smallest path between two vertices belonging to different cycles. 
For a vertex $v$, we call \emph{$v$-leaf} a vertex of degree 1 (or \emph{leaf}) whose neighbor is $v$.  

The \textit{coloring game} on a graph $G$ is a two-player non-cooperative game on the vertices of $G$, introduced by Brams \cite{Brams1981} and rediscovered ten years after by Bodlaender \cite{Bodlaender1991}. Given a set of $k$ colors, Alice and Bob take turns coloring properly an uncolored vertex, with aim for Alice to color entirely $G$, and for Bob to prevent Alice from winning. The \textit{game chromatic number} $\chi_g(G)$ of $G$ is the smallest number of colors insuring Alice's victory. 
This graph invariant has been extensively studied these past twenty years, see for example \cite{Guan1999, Raspaud2009, Zhu2000, Zhu2008}. 

\medskip
In \cite{Bodlaender1991}, Bodlaender proved that every forest $F$ has $\chi_g(F) \le 5$, and exhibited trees $T$ with $\chi_g(T) \ge 4$. In \cite{Faigle1993}, Faigle et al. showed that every forest $F$ has $\chi_g(F) \le 4$. Conditions for trees to have game chromatic number 3 were recently studied by Dunn et al. \cite{Dunn2014}. 

A graph is said \emph{$(1,k)$-decomposable} if its edge set can be partitionned into two sets, one inducing a forest and the other inducing a graph with maximum degree at most $k$. 
 Using the notion of \emph{marking game} introduced by Zhu in \cite{Zhu1999}, He et al. observed in \cite{He2002} that every $(1,k)$-decomposable graph has $\chi_g(G) \le k + 4$, then deduced upper bounds for the game chromatic number of planar graphs with given girth. Among other results, they proved that planar graphs with girth at least 11 are $(1,1)$-decomposable, and therefore their game chromatic number is at most 5. Later, were proved successively the $(1,1)$-decomposability of planar graphs with girth 10 by Bassa et al. \cite{FM10}, girth 9 by  Borodin et al. \cite{BorodinFM9}, and girth 8 by Montassier et al. \cite{MontassierTA} and Wang and Zhang \cite{WangZhang2011} independantly. There exist planar graphs with girth 7 that are not $(1,1)$-decomposable. 

Borodin et al. \cite{BoroIKS2009} gave conditions for planar graphs with no small cycles except triangles to be $(1,1)$-decomposable, in terms of distance between the triangles and of minimal length of a non-triangle cycle. In \cite{Sidorowicz2007}, Sidorowicz, arguing that cactuses are $(1,1)$-decomposable, showed that every cactus $CT$ has $\chi_g(CT) \le 5$. Moreover, she exhibited a cactus with game chromatic number 5, depicted in Figure \ref{cactusSido}. As one can see, this cactus has intersecting triangles. 

\medskip

\begin{figure}
\center
\begin{tikzpicture}[scale=0.3]

	\draw(27, -1.5) node[vertex](top1){};
	\draw(27.5, -1.5) node[vertex](top2){};
	\draw(28.5, -1.5) node[vertex](top3){};
	\draw(29, -1.5) node[vertex](top4){};

	\draw (28,0) -- (top1);
	\draw (28,0) -- (top2);
	\draw (28,0) -- (top3);
	\draw (28,0) -- (top4);

\foreach \i in {0, 4, ..., 24}{
	\draw(\i, 0) node[vertex](left){}; 
	\draw(\i+4, 0) node[vertex](right){};
	\draw(\i+2, 2.5) node[vertex](top){};

	\draw(\i+1, 4) node[vertex](top1){};
	\draw(\i+1.5, 4) node[vertex](top2){};
	\draw(\i+2.5, 4) node[vertex](top3){};
	\draw(\i+3, 4) node[vertex](top4){};

	\draw(\i-1, -1.5) node[vertex](bot1){};
	\draw(\i-0.5, -1.5) node[vertex](bot2){};
	\draw(\i+0.5, -1.5) node[vertex](bot3){};
	\draw(\i+1, -1.5) node[vertex](bot4){};

	\draw (left) -- (right);
	\draw (left) -- (top);
	\draw (top) -- (right);

	\draw (left) -- (bot1);
	\draw (left) -- (bot2);
	\draw (left) -- (bot3);
	\draw (left) -- (bot4);

	\draw (top) -- (top1);
	\draw (top) -- (top2);
	\draw (top) -- (top3);
	\draw (top) -- (top4);
}


\end{tikzpicture}
\caption{A cactus with game chromatic number 5 \cite{Sidorowicz2007}}
\label{cactusSido}
\end{figure}
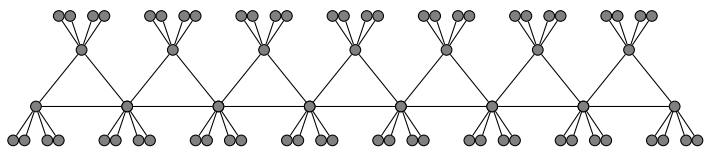

The work we present here started as we tried to answer the following question: 

\begin{question} \label{qfor4}
	Is there an integer $g$ such that every planar graph $G$ with girth at least $g$ has $\chi_g(G) \le 4$ ?
\end{question}

We answer negatively, with a result going way beyond the question we initially asked. 

\begin{theorem}\label{cactus5}
	For any integers $d, k$, there are cactuses $CT$ with girth at least $k$, cycle-distance at least $d$ and $\chi_g(G) = 5$.
\end{theorem}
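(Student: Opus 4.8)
The plan is to design a cactus, or rather a family of them indexed by $d,k$, on which Bob can force the use of a fifth color while all cycles are long and far apart. Since Sidorowicz's example uses intersecting triangles, the whole point is to reproduce the same "trap" using only long cycles that are pairwise distant. My first step would be to isolate the *local* reason Bob wins in Figure~\ref{cactusSido}: around a suitable vertex $v$, Bob can bring it about that $v$ sees four distinct colors on already-colored neighbors while $v$ is still uncolored, and he can do this faster than Alice can "protect" $v$. The gadget accomplishing this is a vertex $v$ with several pendant structures (the pendant leaves and the triangles-with-pendant-leaves in the picture) such that: on each of Bob's turns he can irrevocably commit one new color into $N(v)$, and Alice cannot neutralize a threat by a single move; with enough parallel threats Bob wins the race.

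Second, I would replace every short cycle in that gadget by a long one without changing the combinatorial game near the critical vertex. The key observation is that a triangle $xyz$ in the gadget is only used so that coloring $x$ and $y$ forces a constraint transmitted to a common place; a path or a long even/odd cycle of the right parity threaded between the relevant vertices does the same job, because on a long induced path the marking/coloring game behaves trivially (a pendant path contributes essentially one "free" move to whoever needs it, and cannot be used by the opponent to gain tempo). Concretely, I would take many pendant "gadget arms" attached to $v$, each arm being a long path (or a long cycle hung by a long path) carrying a few leaves near its far end, arranged so that Bob, by playing in an arm, forces a color onto the neighbor of $v$ in that arm, while Alice spends her move either inside that same arm (achieving nothing net) or starting a new arm (too slow). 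Make all the cycles have length $\ge k$ and separate the arms/cycles by paths of length $\ge d$; this is just a matter of padding, and it preserves the game because extra length on a pendant path never helps the defender.

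Third, I would prove the two halves: (i) **Bob wins with 4 colors** — give Bob's explicit strategy, maintaining the invariant "there is an uncolored critical vertex with at least one arm still `armable' for each of several colors, and Bob is not behind in the race", and check Alice has no global defense because the arms are vertex-disjoint except at $v$ and too numerous; (ii) **Alice wins with 5 colors** — this is immediate from Sidorowicz's theorem already quoted, since the graph is a cactus, so $\chi_g(CT)\le 5$ for free. Hence $\chi_g(CT)=5$.

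**Main obstacle.** The hard part is (i): making precise *why* a long pendant path or a long distant cycle cannot be exploited by Alice to save a move or by Bob to lose one, i.e. proving the "padding is neutral" lemma so that the long-girth, large-cycle-distance gadget simulates the triangle gadget exactly. I expect this to require a small lemma of the form: in the coloring game with $4$ colors, a pendant path of length $\ge 3$ attached at a vertex $u$ can be "answered locally" by whoever moves in it, so it neither creates nor destroys tempo at $u$; one then bootstraps this to the whole arm and to the spacing paths. Getting the counting right — how many arms are needed, and that Alice provably cannot keep up — is the other delicate point, but it is essentially the same race argument as in Sidorowicz's construction, transplanted onto the padded graph.
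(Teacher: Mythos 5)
Your overall architecture is reasonable (build a Bob-win gadget, pad it to large girth and cycle-distance, and get $\chi_g\le 5$ for free from Sidorowicz's cactus bound — that last half matches the paper exactly), but the core of part (i) is missing, and the lemma you propose to carry it is the wrong one. Your ``padding is neutral'' claim — that a long pendant path can be ``answered locally'' by whoever moves in it and neither creates nor destroys tempo — is not how the sparse graph is made to work, and as stated it points in the wrong direction. In the paper's proof the long paths are not neutral spacers: they are Bob's weapon. The key lemma is that an uncolored path $v_0\ldots v_d$ (every vertex of which carries many pendant leaves) with two colors already forbidden at $v_0$ is a \emph{winning position} for Bob, provided the forbidden pair at $v_d$ has the right parity relative to $d$; when Alice colors an interior $v_i$, Bob does not ``answer locally to preserve tempo'' — he colors a leaf of $v_{i-1}$ or $v_{i+1}$ so that one of the two resulting subpaths is again winning, by a parity case analysis. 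This requires pendant leaves on \emph{every} internal vertex of every path; your arms carry leaves only ``near the far end,'' which breaks exactly this induction.

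The second missing idea is why the cycles are there and why they must be odd. The parity constraint on winning paths means a path threat can be defused if the colors at its ends have the wrong parity match; an odd cycle at the far end acts as a parity switch (the two arcs around it have opposite parities), so a path from a doubly-constrained vertex into an uncolored odd cycle is winning for Bob \emph{unconditionally}. Your sketch gestures at ``a cycle of the right parity'' but never isolates this mechanism. Finally, your race argument with many arms at a single critical vertex $v$ does not close: if Bob only injects colors into $N(v)$ via leaves, Alice simply colors $v$ once three colors appear. The paper instead uses a path $xyzy'x'$: Bob forces Alice to color $z$, then forces her to color $x$ with a different color, so that the intermediate vertex $y$ has $|\Phi(y)|\ge 2$ while one of its two cycle-arms is still untouched — and that arm is then a winning cycle-path. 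Without this two-stage forcing (and the duplication of arms/components to guarantee an untouched arm survives), Alice has a one-move defense against every threat you describe.
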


This proves that the upper bound of 5 for the game chromatic number of the classes of $(1,1)$-decomposable graphs considered in \cite{FM10, BoroIKS2009, BorodinFM9, He2002, MontassierTA, WangZhang2011} are best possible. 

As our construction needs odd cycles, this asks whether or not this result generalizes to bipartite graphs, seeming more difficult to handle. This question is still open. As a partial result, we can find in  \cite{AndresHoch} a proof by Andres and Hochstättler that every \emph{forest with thin 4-cycles}, which is a cactus constructed from a forest by replacing some edges $uv$ by a pair of 2-vertices both adjacent to $u$ and $v$ (and which is bipartite), has game chromatic number at most 4. 

\section{Proof of Theorem \ref{cactus5}}

We consider Alice and Bob playing the coloring game on a graph $G$ with a set of four colors ${\mathcal C} = \{$\ding{182},\ding{183},\ding{184},\ding{185}$\}$. At each time of the game, we denote by $\phi(v)$ the color of a vertex $v$ (if $v$ is colored), and by $\Phi(v)$ the set of colors in the neighborhood of $v$: if $v$ is uncolored, then this is the set of colors forbidden for $v$. We call \emph{surrounded} an uncolored vertex $v$ with $\Phi(v) = {\mathcal C}$. Bob wins if he can surround a vertex. 

\medskip

We give further the construction of $G$.  For now just assume that \emph{every cycle of $G$ is odd} and that \emph{every non-leaf vertex of $G$ is adjacent to a large number of leaves}, say at least 8 leaves. 

We also give further Bob's strategy in details, but assume that \emph{Bob only plays on the leaves of $G$}. Also, for any vertex $v$, when Bob colors a $v$-leaf, he uses a color that is not already in $\Phi(v)$. If Alice colors a $v$-leaf during the game, then Bob always colors another $v$-leaf if possible. So we can assume that \emph{for any uncolored vertex $v$, there is always at least one uncolored leaf of $v$}. 
Moreover, coloring a $v$-leaf for Alice does not suppress any possibility for Bob (this forbid Bob to color $v$ with the color Alice used, but Bob had no intention to color $v$ anyway) and only increase $\Phi(v)$, coloring a leaf is always unoptimal for Alice. So we assume \emph{Alice never colors a leaf during the game}. 

\medskip

 We describe some \emph{winning positions} for Bob (i.e. partial colorings from where Bob has a winning strategy) in the following lemmas. In the description of every winning position, we assume that it is Alice's turn to play. 

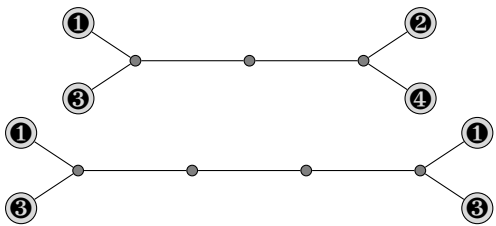
\begin{figure}
\center
\begin{tikzpicture}[scale=0.5]

	\draw(1.5,0) node[vertex, fill=black!15](v1){\ding{182}};
	\draw(3,-1) node[vertex](v2){};
	\draw(6,-1) node[vertex](v3){};
	\draw(9,-1) node[vertex](v4){};
	\draw(10.5,0) node[vertex, fill=black!15](v5){\ding{183}};

	\draw(1.5,-2) node[vertex, fill=black!15](v1p){\ding{184}};
	\draw(10.5,-2) node[vertex, fill=black!15](v5p){\ding{185}};

	\draw (v1p) -- (v2);
	\draw (v1) -- (v2);
	\draw (v2) -- (v3);
	\draw (v3) -- (v4);
	\draw (v4) -- (v5);
	\draw (v4) -- (v5p);
\end{tikzpicture}

\begin{tikzpicture}[scale=0.5]
	\draw(1.5,0) node[vertex, fill=black!15](v1){\ding{182}};
	\draw(3,-1) node[vertex](v2){};
	\draw(6,-1) node[vertex](v3){};
	\draw(9,-1) node[vertex](v4){};
	\draw(12,-1) node[vertex](v5){};
	\draw(13.5,0) node[vertex, fill=black!15](v6){\ding{182}};

	\draw(1.5,-2) node[vertex, fill=black!15](v1p){\ding{184}};
	\draw(13.5,-2) node[vertex, fill=black!15](v6p){\ding{184}};

	\draw (v1p) -- (v2);
	\draw (v1) -- (v2);
	\draw (v2) -- (v3);
	\draw (v3) -- (v4);
	\draw (v4) -- (v5);
	\draw (v5) -- (v6);
	\draw (v5) -- (v6p);
\end{tikzpicture}
\caption{Two winning paths (Lemma \ref{winningpath})}
\label{wpFig}
\end{figure}

\begin{lemma} \label{winningpath}
	Suppose that $G$ contains a path of length $d$, $P = v_0 \ldots v_d$, of uncolored non-leaf vertices. Also suppose $|\Phi(v_0)| \ge 2$, say $\{$\ding{182},\ding{184}$\} \subseteq \Phi(v_0)$. The game is in a winning position for Bob if (see Figure \ref{wpFig}):
\begin{itemize}
	\item $d$ is odd and $\{$\ding{182},\ding{184}$\} \subseteq \Phi(v_d)$.
	\item $d$ is even and $\{$\ding{183},\ding{185}$\} \subseteq \Phi(v_d)$.
\end{itemize}
	We say that $P$ is a \emph{winning path}. 
\end{lemma}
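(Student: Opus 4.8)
The plan is to describe Bob's strategy on the winning path $P$ and argue by induction on $d$ that Bob can surround some vertex. The key invariant I would maintain is that, at each of Alice's turns, there remains an uncolored subpath $Q = v_i \ldots v_j$ of non-leaf vertices such that $Q$ is again a winning path in the sense of the lemma: the parity of the length of $Q$ together with the forbidden-color sets $\Phi$ at its two endpoints match one of the two cases. The base case is $d=0$: then $|\Phi(v_0)| \ge 2$ is required with the appropriate two colors; combined with the hypothesis that $v_0$ has at least $8$ uncolored leaves and that Bob always answers an uncolored leaf with a color absent from $\Phi(v_0)$, Bob can on his next move raise $|\Phi(v_0)|$ to $4$, surrounding $v_0$ unless Alice colored $v_0$ in the meantime — but the two available colors at $v_0$ are exactly the two that Bob is poised to add, so coloring $v_0$ does not help, and in the remaining small cases Bob wins directly.

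For the inductive step, suppose $d \ge 1$ and it is Alice's turn. Alice either colors a vertex of $P$ (recall she never colors leaves) or colors a vertex outside $P$. If Alice plays off $P$, Bob responds by coloring a leaf of $v_0$ (or $v_d$) with a color not yet in $\Phi(v_0)$ — this is possible since $|\Phi(v_0)| \le 3$ as long as $v_0$ is not already surrounded — thereby either immediately surrounding $v_0$ or forcing $|\Phi(v_0)| = 3$; one more such exchange surrounds $v_0$, so WLOG Alice must eventually defend by coloring a vertex of $P$. When Alice colors some $v_m$ with $1 \le m \le d-1$ (the endpoint cases $v_0, v_d$ reduce the length by one and shift a forbidden color, which is handled similarly), the path $P$ splits into $v_0 \ldots v_m$ and $v_m \ldots v_d$. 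Coloring $v_m$ adds $\phi(v_m)$ to $\Phi(v_{m-1})$ and to $\Phi(v_{m+1})$. A short parity/counting check shows that for at least one of the two subpaths $v_0 \ldots v_{m-1}$ or $v_{m+1} \ldots v_d$, the endpoint adjacent to $v_m$ now has the two required forbidden colors for the parity of that shorter path (the color $\phi(v_m)$ is a third forbidden color, and the two ``needed'' colors are among what remains forced): indeed Bob gets to choose, via his leaf-moves near $v_m$ before Alice's move, which colors accumulate around $v_m$, so he can prearrange that whichever side Alice does not kill, its new endpoint is correctly constrained. Bob then recurses on that strictly shorter winning path, and Bob may also spend the current move (after Alice colors $v_m$) adding a leaf-color at the far endpoint $v_0$ or $v_d$ to keep its $\Phi$-set from shrinking — but since $\Phi$-sets only grow, nothing is lost there.

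The main obstacle I expect is the bookkeeping in the inductive step: verifying precisely that, after Alice colors an interior vertex $v_m$, the color $\phi(v_m)$ lands as the ``right'' extra forbidden color so that one of the two sub-paths genuinely satisfies the parity-and-colors hypothesis of the lemma. This is where the alternation between the $\{$\ding{182},\ding{184}$\}$ case (odd length) and the $\{$\ding{183},\ding{185}$\}$ case (even length) must be tracked carefully: cutting at $v_m$ changes the parity of each piece relative to $P$ depending on $m$, and one must confirm that Bob's freedom to pre-load leaf-colors near the endpoints and near $v_m$ (using the ``at least $8$ leaves'' and ``always an uncolored leaf available'' assumptions) is always enough to force the needed two-element subset into $\Phi$ at the relevant new endpoint. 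The rest — that Alice cannot simply ignore $P$, that coloring leaves never helps Alice, that the base cases are wins — follows directly from the standing assumptions collected before the lemma.
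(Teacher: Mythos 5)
Your overall architecture matches the paper's: induction on $d$, Bob repeatedly colors $v_0$-leaves to threaten to surround $v_0$, and when Alice colors an interior vertex $v_m$ the path splits and Bob recurses on one of the two pieces. However, the step you yourself flag as ``the main obstacle'' --- verifying that after Alice colors $v_m$ one of the two subpaths really becomes a winning path --- is the entire content of the lemma, and you have not done it; worse, the mechanism you propose for it does not work. You suggest Bob can ``prearrange,'' via leaf-moves near $v_m$ \emph{before} Alice's move, which colors accumulate around $v_m$. Bob cannot do this: he has exactly one move per turn, he must spend it on $v_0$-leaves to keep the surrounding threat alive (otherwise Alice is under no pressure to touch $P$ at all), and he does not know in advance which interior vertex Alice will choose. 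You also suggest Bob might spend his response move at the far endpoint, which would leave neither subpath properly constrained.

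The correct mechanism, which is what the paper's case analysis establishes, uses no pre-loading at all: Alice's own move donates the color $\phi(v_m)$ to both $\Phi(v_{m-1})$ and $\Phi(v_{m+1})$ for free, and Bob's single response move colors one leaf of $v_{m-1}$ or of $v_{m+1}$ with the \emph{partner} color of $\phi(v_m)$ in its parity class (the other odd color \ding{182}/\ding{184}, or the other even color \ding{183}/\ding{185}). The point to check is that the side Bob picks can always be chosen so that the resulting two-color set at the new endpoint, together with the inherited condition at $v_0$ (resp.\ $v_d$) and the length parity $d_1$ or $d_2$ (where $d_1+d_2=d-2$), satisfies the lemma's hypotheses. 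This splits into the cases $d$ even ($d_1,d_2$ of equal parity) and $d$ odd ($d_1,d_2$ of opposite parity), crossed with whether Alice's color is in the odd or even class; in every case exactly one side works, and identifying it is a short but unavoidable verification. Your proposal as written leaves this verification open and substitutes an impossible strategy for it, so there is a genuine gap. (A minor additional slip: in the base case $d=0$ the two hypotheses already force $\Phi(v_0)={\mathcal C}$, so $v_0$ is surrounded immediately; Bob needs no further moves.)
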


\begin{proof}
Recall that, by assumption, every vertex of $P$ has at least one uncolored leaf. 

In the case $d =0$, $P$ is reduced to a single surrounded vertex and Bob wins. 

The case $d = 1$ corresponds to an edge $v_0v_1$ with $v_0$ and $v_1$ uncolored and $|\Phi(v_0) \cap \Phi(v_1)| \ge 2$, say $\{$\ding{182},\ding{184}$\} \subseteq \Phi(v_0) \cap \Phi(v_1)$. If Alice colors $v_0$, she has to use an even color, Bob colors a $v_1$-leaf with the other even color, surrounds $v_1$, and wins. If Alice colors $v_1$, then Bob can surrounds $v_0$ as well. So Bob's winning strategy consists in coloring leaves of $v_0$ with different colors until $v_0$ is surrounded or Alice colors $v_0$ or $v_1$.

\medskip

We prove the other cases by induction. Assume that our lemma is true for every value of $d$ strictly smaller than an integer $q$ and consider the case $d = q$. Bob's strategy is to colors a $v_0$-leaf until $v_0$ is surrounded or Alice colors a vertex of $P$. When Alice colors a vertex $v_i$ of $P$, we consider two cases.
\begin{itemize}
	\item If $i = 0$ or $i = q$, say w.l.o.g. $i = 0$, then she uses an even color. Bob colors a $v_1$-leaf with the other even color. By induction, Bob wins since $P - v_0$ is a winning path of length $q - 1$.  
	\item If $i \ne 0$ and $i \ne q$, then let $P_1 = v_0 \ldots v_{i-1}$ and $P_2 = v_{i+1} \ldots v_q$. We denote by $d_1$ and $d_2$ the length of $P_1$ and $P_2$ respectively. We have $d_1 + d_2 = q - 2$. We consider two subcases: 
\begin{itemize}
		\item Either $q$ is even, and $d_1$ and $d_2$ are either both even or both odd. Say we have $\{$\ding{182},\ding{184}$\} \subseteq \Phi(v_0)$ and $\{$\ding{183},\ding{185}$\} \subseteq \Phi(v_q)$. Without loss of generality, assume Alice colored $v_i$ with \ding{182}. If $d_1$ and $d_2$ are both odd, then Bob colors a $v_{i-1}$-leaf with \ding{184}. If they are both even, then Bob colors a $v_{i+1}$-leaf with \ding{184}. Path $P_1$ or $P_2$ respectively is a winning path and Bob wins by induction hypothesis. 
		\item Either $q$ is odd, and say $\{$\ding{182},\ding{184}$\} \subseteq \Phi(v_0) \cap \Phi(v_1)$. Among $d_1$ and $d_2$, one is even and one is odd, say $d_1$ is even and $d_2$ is odd. If Alice colored $v_i$ with an odd color, then Bob colors a $v_{i+1}$-leaf with the other odd color. If Alice colored $v_i$ with an even color, then Bob colors a $v_{i-1}$-leaf with the other even color. Path $P_1$ or $P_2$ respectively is a winning path and Bob wins by induction hypothesis. 
\end{itemize}
\end{itemize}
This concludes our proof. 
\end{proof}

\begin{figure}
\center
\begin{tikzpicture}[scale=0.3]

	\draw(0,0) circle (4);
	\draw[rotate=144](4,0) node[vertex](v3){};
	\draw[rotate=216](4,0) node[vertex](v4){};

	\draw[rotate=144](6,-1) node[vertex, fill=black!15](v3s1){\ding{182}}; 
	\draw[rotate=144](6,1) node[vertex, fill=black!15](v3s2){\ding{184}}; 
	\draw[rotate=216](6,0) node[vertex, fill=black!15](v4s){\ding{182}}; 
	
	\draw (v3) -- (v3s1); 
	\draw (v3) -- (v3s2); 
	\draw (v4) -- (v4s); 

\end{tikzpicture}
\caption{A winning cycle (Lemma \ref{winningcycle})}
\label{wcFig}
\end{figure}
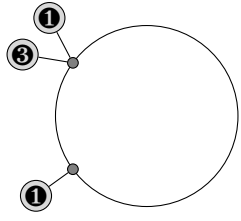

\begin{lemma} \label{winningcycle}
	Suppose that $C$ is an (odd) cycle of $G$ of uncolored vertices. If there are two neighbors $u$ and $v$ with $|\Phi(u)| \ge 2$ and $|\Phi(u) \cap \Phi(v)| = 1$ (see Figure \ref{wcFig}), then $G$ is in a winning position for Bob. We say that $C$ is a \emph{winning cycle}.
\end{lemma}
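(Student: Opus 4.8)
The plan is to reduce everything to Lemma~\ref{winningpath}. Write the cycle as $C=w_0w_1\cdots w_{n-1}w_0$ with $n$ odd, put $u=w_0$ and $v=w_1$, let $a$ be the unique colour in $\Phi(u)\cap\Phi(v)$, and pick a second colour $b\in\Phi(u)$ (which exists since $|\Phi(u)|\ge 2$); write $a',b'$ for the remaining two colours. Since a colour never leaves the neighbourhood of an uncoloured vertex, $\{a,b\}\subseteq\Phi(u)$ and $a\in\Phi(v)$ will stay true for the rest of the game. Bob's ``waiting'' move, as long as no vertex of $C$ has been coloured, is to colour a $u$-leaf with a colour not in $\Phi(u)$; this is always legal, it enlarges $\Phi(u)$, and neither player can shrink $\Phi(u)$ while $u$ is uncoloured, so within two such moves $u$ becomes surrounded --- unless before that Alice colours a vertex of $C$, or some vertex of $C$ becomes surrounded (in which case Bob has already won). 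So the substance of the proof is Bob's reply once Alice eventually colours a vertex $w_j$ of $C$ with a colour $c$, at a moment when no vertex of $C$ is surrounded.

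Colouring $w_j$ turns $C-w_j$ into a path $Q_j$ of odd length $n-2$ whose two endpoints $w_{j-1}$ and $w_{j+1}$ have both just acquired $c$. I would then split into two cases. If $j\in\{0,1,n-1\}$ --- Alice hit $u$, $v$, or the other neighbour of $u$ on $C$ --- then one endpoint of $Q_j$ is $v$ (when $j=0$) or $u$ (when $j\in\{1,n-1\}$) and already carries a $2$-element set $P$ with $a\in P$ and $c\in P$: when $j=0$ the move forces $c\ne a$, and $v\sim u$ gives $c\in\Phi(v)$, so $P:=\{a,c\}\subseteq\Phi(v)$; when $j\in\{1,n-1\}$, $w_j\sim u$ gives $c\in\Phi(u)$, and $P:=\{a,c\}$ works (or $P:=\{a,b\}$ if $c=a$). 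The other endpoint of $Q_j$ is adjacent to $w_j$, so it too carries $c$; Bob colours one of its leaves with the single colour of $P\setminus\{c\}$, after which both endpoints of $Q_j$ contain $P$ and, $Q_j$ having odd length, $Q_j$ is a winning path by Lemma~\ref{winningpath}. (When $j\in\{1,n-1\}$ and $c\notin\{a,b\}$, Bob may alternatively just surround $u$, which by then has three colours in its neighbourhood.) If instead $2\le j\le n-2$, then $u$ lies in the interior of $Q_j$; the two subpaths of $Q_j$ joining $u$ to $w_{j-1}$ and to $w_{j+1}$ have lengths summing to $n-2$, hence of opposite parity, and Bob takes the one, call it $R$, whose parity matches $c$ --- odd when $c\in\{a,b\}$, even when $c\in\{a',b'\}$. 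He colours a leaf of the far endpoint of $R$ (which carries $c$) with the other colour of the relevant pair, so that $R$ ends up with $\{a,b\}$ at its $u$-end and with $\{a,b\}$ (if $R$ is odd) or $\{a',b'\}$ (if $R$ is even) at the other end; thus $R$ is a winning path by Lemma~\ref{winningpath}. In every case the vertices involved lie on $C$, hence are non-leaves and still uncoloured, and each has an uncoloured leaf, so Lemma~\ref{winningpath} genuinely applies.

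I expect the main obstacle to be exactly this bookkeeping around the two neighbours of $u$ on $C$, together with the parity accounting. One must notice that colouring $v$ or $w_{n-1}$ with a colour outside $\{a,b\}$ forces that colour into $\Phi(u)$ --- which is precisely what supplies the pair $P$, or else lets Bob surround $u$ outright --- and that for interior $j$ the matching parity is available only because $n$ is odd, which is the very feature the theorem's construction exploits. The remaining checks (small cases such as $n=3$ or $R$ of length $1$, and the degenerate situations during the waiting phase) are routine once this scheme is in place.
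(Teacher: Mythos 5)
Your proof is correct, and it rests on the same two pillars as the paper's: everything is funneled into Lemma~\ref{winningpath}, and the oddness of $C$ is used via the fact that deleting the one coloured cycle vertex leaves a path of odd length $n-2$ (or two arcs from $u$ of opposite parity) on which Bob can match colour pairs at the endpoints. Where you diverge is in Bob's waiting strategy, and this changes the shape of the case analysis considerably. The paper's Bob, the moment Alice plays anywhere other than $u$ or $v$, colours a $v$-leaf with the second colour of $\Phi(u)$, turning the single edge $uv$ into a winning path of length $1$; from then on Lemma~\ref{winningpath} takes over and the rest of the cycle is irrelevant. Consequently the paper only ever needs the ``cycle minus a vertex'' argument in the two cases where Alice preempts by colouring $u$ or $v$, and the interior case $2\le j\le n-2$ --- which is the bulk of your write-up, with its choice of the arc $R$ of matching parity --- simply never arises. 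Your Bob instead stalls by inflating $\Phi(u)$ toward a surround, which is a perfectly sound threat (and your parity bookkeeping for the interior case is right: the two arcs from $u$ have lengths summing to the odd number $n-2$, so whichever parity the colour $c$ calls for is available), but it obliges you to answer Alice's move at every position of the cycle. So: same reduction, same use of oddness, but the paper's single observation that $uv$ is already one Bob-move away from being a winning path collapses most of your cases; your version is self-contained and shows explicitly how every Alice move on $C$ is punished, at the cost of length.
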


\begin{proof}
	Let $C = v_0 v_1 \ldots v_k$ be an odd cycle of uncolored vertices with, say, $\{$\ding{182},\ding{183}$\} \subseteq \Phi(v_0)$ and \ding{182} $\in \Phi(v_1)$. If the next Alice's move is not to color $v_0$ or $v_1$, then Bob colors a leaf of $v_1$ with \ding{183} and Bob wins by Lemma \ref{winningpath} ($v_0v_1$ is a winning path). If Alice colors $v_0$, then she has to use a color different to \ding{182} and, at his turn, Bob colors a $v_k$-leaf with \ding{182}. Since $C$ is an odd cycle, Bob wins by Lemma \ref{winningpath} (path $v_1 \ldots v_k$ is a winning path). Similarly, if Alice colors $v_1$, Bob can color a $v_2$-leaf with a color among \ding{182} and \ding{183} different from $\phi(v_1)$ and $v_2  \ldots v_k v_0$ is a winning path. 
\end{proof}

\begin{figure}
\center
\begin{tikzpicture}[scale=0.3]

	\draw(0,0) circle (4);
	\draw[rotate=0](-6,0) node[vertex](v3){};
	\draw[rotate=0](-8,0) node[vertex](v2){};
	\draw[rotate=0](-10,0) node[vertex](v1){};
	\draw[rotate=0](-4,0) node[vertex](v4){};
	\draw[rotate=72](-4,0) node[vertex](v5){};

	\draw(-11, 2) node[vertex, fill=black!15](v3s1){\ding{182}}; 
	\draw(-9,2) node[vertex, fill=black!15](v3s2){\ding{184}}; 
	\draw[rotate=72](-2,0) node[vertex, fill=black!15](v4s){\ding{183}}; 
	
	\draw (v1) -- (v3s1); 
	\draw (v1) -- (v3s2); 
	\draw (v5) -- (v4s); 
	\draw (v1) -- (v2);
	\draw (v2) -- (v3);
	\draw (v3) -- (v4);
\end{tikzpicture}
\caption{A winning cycle-path (Lemma \ref{winningcyclepath})}
\label{wcpFig}
\end{figure}
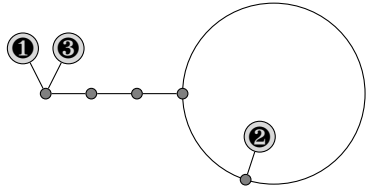

\begin{lemma} \label{winningcyclepath}
	Suppose that $G$ contains a path of length $d$, $P = v_0 \ldots v_d$, of uncolored non-leaf vertices. Also suppose $|\Phi(v_0)| \ge 2$, say $\{$\ding{182},\ding{184}$\} \subseteq \Phi(v_0)$, and suppose $v_d$ belongs to an odd cycle $C$ of uncolored vertices, with $w$ being a neighbor of $v_d$ in this cycle. Graph $G$ is in a winning position for Bob if (see Figure \ref{wcpFig}):
\begin{itemize}
	\item $d$ is even and \ding{182} or \ding{184} is in $\Phi(w)$.
	\item $d$ is odd and \ding{183} or \ding{185} is in $\Phi(w)$.
\end{itemize}
	We say that $P \cup C$ is a \emph{winning cycle-path}. 
\end{lemma}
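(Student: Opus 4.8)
The plan is to argue by induction on the path length $d$, following the proof of Lemma~\ref{winningpath} and calling on Lemma~\ref{winningcycle} (as well as Lemma~\ref{winningpath} itself, as a black box) once Bob's threat has been pushed all the way to the cycle. I will use freely that renaming colors is harmless, so that Lemma~\ref{winningpath} in fact applies to any path carrying a fixed $2$-element color set $S$ at one end and carrying $S$ (if the length is odd) or the complementary $2$-set (if the length is even) at the other end; I still call such a path a \emph{winning path}. In the base case $d=0$ the vertex $v_0=v_d$ lies on $C$, satisfies $|\Phi(v_0)|\ge 2$, and since $d$ is even the hypothesis gives $\Phi(v_0)\cap\Phi(w)\neq\emptyset$; then either this intersection has size $1$ and Lemma~\ref{winningcycle} applies to $C$, or it has size at least $2$ and the edge $v_0w$ is a winning path of length $1$.

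For the inductive step, $d\ge 1$, Bob begins exactly as in Lemma~\ref{winningpath}: he keeps coloring leaves of $v_0$ with fresh colors, so that either $v_0$ becomes surrounded and Bob wins, or Alice is forced to color a vertex $x$ of $P\cup C$ (any move outside $P\cup C$ only helps Bob, and only coloring $v_0$ itself can postpone the surrounding of $v_0$). Put $c=\phi(x)$; note that if $x=v_0$ then necessarily $c\in\{$\ding{183},\ding{185}$\}$. The first, routine, family of cases is $x=v_i$ with $0\le i\le d-1$. One then proceeds as in Lemma~\ref{winningpath}: with $P_1=v_0\ldots v_{i-1}$ (empty when $i=0$) and $P_2=v_{i+1}\ldots v_d$, the vertices $v_{i-1}$ and $v_{i+1}$ now see $c$, so with one leaf Bob can bring the color pair containing $c$ into $\Phi(v_{i-1})$, or into $\Phi(v_{i+1})$; a short parity count, using $d_1+d_2=d-2$ (or $d_2=d-1$ when $i=0$) together with the hypothesis on $\Phi(w)$, shows that whatever $c$ is, Bob can either make $P_1$ a winning path and finish by Lemma~\ref{winningpath}, or make $P_2\cup C$ a winning cycle-path whose defining path has length $d_2<d$ (after relabeling as $\{$\ding{182},\ding{184}$\}$ the pair just installed at $v_{i+1}$) and finish by the induction hypothesis.

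The main obstacle is the case where Alice colors a vertex of $C$, since the cycle then opens and the earlier lemmas no longer apply verbatim. Suppose first $x=v_d$: now $v_{d-1}$ and the two neighbors of $v_d$ on $C$ all see $c$, so $C\setminus\{v_d\}$ is a path of odd length both of whose ends see $c$, one of these ends being $w$, whose $\Phi$-set moreover meets the color pair $\pi$ prescribed by the hypothesis ($\pi=\{$\ding{182},\ding{184}$\}$ for $d$ even, $\pi=\{$\ding{183},\ding{185}$\}$ for $d$ odd). If $c\in\pi$, a parity check shows $v_0\ldots v_{d-1}$ can be made into a winning path; if $c\notin\pi$, pick $t\in\pi\cap\Phi(w)$ (so $t\neq c$), have Bob place $t$ into the $\Phi$-set of the other end of $C\setminus\{v_d\}$ with one leaf, and this odd-length path, which now carries $\{c,t\}$ at both ends, is a winning path. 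Suppose instead $x=c_j$ is a vertex of $C$ other than $v_d$: deleting $c_j$ turns $P\cup C$ into a spider with center $v_d$ and three arms, namely $A_1$ (the reversal of $P$, of length $d$) and the two pieces $A_2,A_3$ into which $C\setminus\{c_j\}$ is cut at $v_d$, the outer ends of $A_2$ and $A_3$ both seeing $c$. Because $|C|$ is odd, the two paths $A_1\cup A_2$ and $A_1\cup A_3$ through $v_d$ have lengths of opposite parity; since $v_0$ carries $\{$\ding{182},\ding{184}$\}$ and Bob can install the color pair containing $c$ at the relevant outer end with one leaf, exactly one of these two paths becomes a winning path. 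The remaining arm merely dangles from $v_d$, which is an interior vertex (or an endpoint) of that winning path; it can only enlarge some $\Phi$-sets, which never hurts Bob, so the proof of Lemma~\ref{winningpath} still delivers the win.

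I expect the main difficulty to lie in the parity bookkeeping, and above all in the spider case: it is precisely because $|C|$ is odd that $A_1\cup A_2$ and $A_1\cup A_3$ turn out to have opposite parities, which is exactly what makes it true that, whatever color Alice plays and wherever on $C$ she plays it, one of the two is a winning path; this is also the step at which the hypothesis forcing $\Phi(w)$ to meet $\{$\ding{182},\ding{184}$\}$ (for $d$ even) or $\{$\ding{183},\ding{185}$\}$ (for $d$ odd) earns its keep. As in the previous lemmas, one also uses throughout the standing assumptions that $v_0$ always has an uncolored leaf, so that Bob can sustain his threat, and that Alice never colors a leaf.
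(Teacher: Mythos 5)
Your proof is correct and follows essentially the same route as the paper: induction on $d$, with Bob repeatedly threatening to surround $v_0$ and a case analysis on Alice's reply that reduces each branch to Lemma \ref{winningpath}, Lemma \ref{winningcycle}, or a winning cycle-path with a shorter path. The only organizational difference is that where the paper treats Alice coloring $w$ and Alice coloring other vertices of $C$ as separate cases (the latter by making $P+w$ a winning path via the hypothesis on $\Phi(w)$), you handle all of $C\setminus\{v_d\}$ uniformly with the spider decomposition at $v_d$; this is equally valid, since the oddness of $C$ makes the two through-paths have opposite parities.
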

\begin{proof}
The case $d = 0$ is true by Lemma \ref{winningcycle}, since $C$ is then a winning cycle. 

	We prove the other cases by induction on $d$. 
Assume our lemma is true for every value of $d$ strictly smaller than $q$. We consider the case $d = q$. We have different cases depending of Alice's move: 
\begin{itemize}
	\item If Alice colors $w$, then let $w'$ be the neighbor of $w$ different from $v_q$. The uncolored vertices of $P \cup C$ form a path $P' = v_0, \ldots, v_q, \ldots, w'$. Since $C$ has odd length, one path among $P$ and $P'$ is odd and the other is even. If $\phi(w) \in \Phi(v_0)$, then Bob can color a leaf of $v_q$ or $w'$ such that the odd path among $P$ and $P'$ become a winning path. If $\phi(w) \not \in \Phi(v_0)$, then Bob colors a leaf of $v_q$ or $w'$ such that the even path among $P$ and $P'$ is a winning path. By Lemma \ref{winningpath}, Bob is winning the game. 
	\item If Alice colors $v_0$, then Alice has to use an even color, Bob colors a $v_1$-leaf with the other even color and $(P - v_0) \cup C$ is a winning cycle-path, Bob wins by induction hypothesis. 
	\item If Alice colors $v_q$, then let $w''$ be the other neighbor of $v_d$ in $C$ than $w$. If there is in $\Phi(w)$ a color $c$ different to $\phi(v_q)$, then Bob colors $w''$ with $c$ and $C - v_d$ is winning path since $C$ has odd length. By  Lemma \ref{winningpath}, Bob wins. If $\Phi(w) = \phi(v_d)$, then we have two cases: 
\begin{itemize}
	\item If $d$ is even, then $\phi(v_d) =$ \ding{182} or \ding{184}, say \ding{182}. Bob colors a $v_{d-1}$-leaf with \ding{184} and $P - v_d$ is a winning path. 
	\item If $d$ is odd, then $\phi(v_d) =$ \ding{183} or \ding{185}, say \ding{183}. Bob colors a $v_{d-1}$-leaf with \ding{185} and $P - v_d$ is a winning path. 
\end{itemize}
	Bob wins by Lemma \ref{winningpath}. 
	\item If Alice colors $v_i$, $i \ne 0$ and $i \ne q$, then let $P_1 = v_0 \ldots v_{i-1}$ and $P_2 = v_{i+1} \ldots v_q$. We denote by $d_1$ and $d_2$ the length of $P_1$ and $P_2$ respectively. We have $d_1 + d_2 = q - 2$. We consider two subcases: 
\begin{itemize}
	\item If $d_1$ is odd and Alice used an odd color on $v_i$, or if $d_1$ is even and Alice used an even color on $v_i$, then Bob colors $v_{i-1}$ with the other odd color (resp. the other even color), and Bob wins by Lemma \ref{winningpath} ($P_1$ is a winning path). 
	\item If $d_1$ is odd and Alice used an even color on $v_i$, or if $d_1$ is even and Alice used an odd color on $v_i$, then Bob colors $v_{i+1}$ with the other even color (resp. the other odd color). Bob wins by induction hypothesis, since $P_2 \cup C$ is a winning cycle-path of smaller path-length. 
\end{itemize}
	\item If Alice colors another vertex, then one can observe that Bob can color a $w$-leaf in such a way that $P + w$ is a winning path.
\end{itemize}
Our lemma is true for every $d$ by induction. This concludes our proof. 
\end{proof}

\begin{corollary} \label{obs-cp}
	If, at Bob's turn, there is in $G$ an uncolored path $P = v_0, \ldots, v_d$ such that $|\Phi(v_0)| \ge 2$ and $v_d$ belongs to an uncolored odd cycle, then Bob has a winning strategy (he can obtain a winning cycle-path in one move). 
\end{corollary}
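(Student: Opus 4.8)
The plan is to deduce this immediately from Lemma~\ref{winningcyclepath}: Bob spends his single move putting a colour of the appropriate parity into the set of forbidden colours of a neighbour of $v_d$ on the cycle, which turns the configuration into a winning cycle-path with Alice to move.

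Fix notation first. As $|\Phi(v_0)|\ge 2$, choose two colours of $\Phi(v_0)$ and call them \ding{182} and \ding{184} (so \ding{183} and \ding{185} name the other two), let $C$ be the uncolored odd cycle through $v_d$, and let $w$ be one of the two neighbours of $v_d$ on $C$. By the standing assumptions of this section, every vertex of $P$ is an uncolored non-leaf, every vertex of $C$ is uncolored, and every uncolored vertex --- in particular $w$ --- still has an uncolored leaf.

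Now describe Bob's move. Suppose $d$ is even; the case $d$ odd is handled the same way, with Bob instead making sure that \ding{183} or \ding{185} lies in $\Phi(w)$ and invoking the second bullet of Lemma~\ref{winningcyclepath}. If \ding{182} or \ding{184} already belongs to $\Phi(w)$, then Bob makes a harmless move, e.g.\ he colours an uncolored $w$-leaf with any colour not in $\Phi(w)$ (such a colour exists, for otherwise $w$ is already surrounded and the game is over); this only enlarges $\Phi(w)$. If instead neither \ding{182} nor \ding{184} is in $\Phi(w)$, then in particular \ding{182} is absent from the neighbourhood of $w$, so Bob legally colours an uncolored $w$-leaf with \ding{182}. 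In either case, after Bob's move it is Alice's turn, $P$ is still a path of uncolored non-leaves with $\{$\ding{182},\ding{184}$\}\subseteq\Phi(v_0)$, $C$ is still an uncolored odd cycle through $v_d$, and \ding{182} or \ding{184} lies in $\Phi(w)$; so $P\cup C$ is a winning cycle-path and Bob wins by Lemma~\ref{winningcyclepath}.

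There is essentially no obstacle here. The only points that need a word are that Bob always has a legal move --- which follows from the standing assumption that $w$ has an uncolored leaf, together with the trivial remark that if that leaf admits no colour then $\Phi(w)={\mathcal C}$ and Bob has already won --- and that colouring a leaf of $w$ leaves every vertex of $P\cup C$ uncolored and does not shrink $\Phi(v_0)$; both are immediate, since the newly coloured vertex is a leaf incident only to $w$ and since adding a colour to a vertex's neighbourhood only enlarges its set of forbidden colours.
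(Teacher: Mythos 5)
Your proof is correct and matches the paper's intent exactly: the paper offers no explicit argument beyond the parenthetical ``he can obtain a winning cycle-path in one move,'' and your move --- coloring a $w$-leaf to place a color of the parity required by Lemma~\ref{winningcyclepath} into $\Phi(w)$ (or making a harmless leaf move if the parity condition already holds) --- is precisely that one move, with the legality and non-interference checks properly noted.
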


\begin{figure}
\center
\begin{tikzpicture}[scale=0.8]

\foreach \c in {-7.2, -5.6, ..., 7.2}{
	\draw(\c,-1.6) circle (0.6);
}

	\draw[rotate=0](0,0) node[vertex](v3){};
	\draw[rotate=0](-2,0) node[vertex](v2){};
	\draw[rotate=0](-4,0) node[vertex](v1){};
	\draw[rotate=0](2,0) node[vertex](v4){};
	\draw[rotate=0](4,0) node[vertex](v5){};

	\draw[rotate=0](7.2,-1) node[vertex](v5s1){};
	\draw[rotate=0](5.6,-1) node[vertex](v5s2){};
	\draw[rotate=0](4,-1) node[vertex](v4s1){};
	\draw[rotate=0](2.4,-1) node[vertex](v4s2){};
	\draw[rotate=0](-7.2,-1) node[vertex](v1s1){};
	\draw[rotate=0](-5.6,-1) node[vertex](v1s2){};
	\draw[rotate=0](-4,-1) node[vertex](v2s1){};
	\draw[rotate=0](-2.4,-1) node[vertex](v2s2){};
	\draw[rotate=0](0.8,-1) node[vertex](v3s2){};
	\draw[rotate=0](-0.8,-1) node[vertex](v3s1){};

	\draw(-4,0.5) node{$x$}; 
	\draw(-2,0.5) node{$y$}; 
	\draw(0,0.5) node{$z$}; 
	\draw(2,0.5) node{$x'$}; 
	\draw(4,0.5) node{$y'$}; 

	\draw(-7.2,-1.6) node{$C_{x1}$}; 
	\draw(-5.6,-1.6) node{$C_{x2}$}; 
	\draw(-4,-1.6) node{$C_{y1}$};
	\draw(-2.4,-1.6) node{$\ldots$};  

	\draw (v5) -- (v4); 
	\draw (v1) -- (v2);
	\draw (v2) -- (v3);
	\draw (v3) -- (v4);

	\draw (v5) -- (v5s1);
	\draw (v5) -- (v5s2);
	\draw (v4) -- (v4s1);
	\draw (v4) -- (v4s2);
	\draw (v3) -- (v3s1);
	\draw (v3) -- (v3s2);
	\draw (v2) -- (v2s1);
	\draw (v2) -- (v2s2);
	\draw (v1) -- (v1s1);
	\draw (v1) -- (v1s2);

\end{tikzpicture}
\caption{A sketch of graph $G$}
\label{badAlice}
\end{figure}
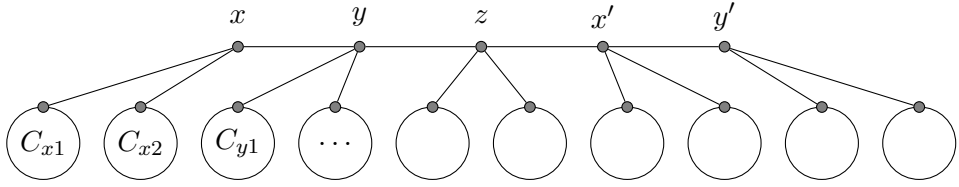

We describe now explicitely the graph $G$ on which Alice and Bob are playing, depicted in Figure \ref{badAlice}. Here paths and cycles have arbitrary large length and cycles are odd. For a vertex $v$, we say we \emph{attach a cycle $C$ to $v$ with a path $P$} if we add $C$ and $P$ such that the endvertices of $P$ are $v$ and a vertex of $C$. We start from a path $xyzy'x'$, then for every vertex $v$ in this path, we attach two odd cycles $C_{v1}$ and $C_{v_2}$ to $v$ with two paths $P_{v1}$ and $P_{v2}$ respectively. We denote $G_v = C_{v1} + C_{v2} + P_{v1} + P_{v2}$.
Finally, we copy our graph (i.e. we add a similar second connected component to the graph). 
Then we add a large number of leaves (at least 8) to every vertex to obtain $G$. 

\medskip

Now we give the winning strategy for Bob. As we assumed, Bob only plays on leaves. After Alice's first move, Bob considers the connected component of the graph where Alice did not play on and will only play on it. 

\begin{description}
\item[{\bf Step 1.}] \emph{Bob plays in $G_z$. He aims to end Step 1 with his victory or with Alice coloring $z$. Moreover, Bob wants to be sure Alice colors in Step 1 at most one vertex not in $G_z$}.  

At his first move, Bob colors a $z$-leaf with an arbitrary color. If Alice colors $z$ at her second move, then Bob goes to Step 2. Otherwise, Bob colors another $z$-leaf with another color. If Alice colors $z$, then Alice played at most one move out of $G_z$ and Bob goes to Step 2. Otherwise, 
\begin{itemize}
	\item If Alice has played her two moves in $G_z$, then Bob colors another $z$-leaf and Alice has to color $z$ for Bob not to surround it next turn. Bob goes to Step 2. 
	\item Otherwise, then we can assume without loss of generality that $P_{z1} \cup C_{z1}$ is uncolored, and Bob wins by Corollary \ref{obs-cp}.  
\end{itemize}

\item[{\bf Step 2.}] Since Alice only played once out of $G_z$, we assume that $G_x \cup G_y$ is uncolored. \emph{Now Bob plays in $G_x$. He aims to end Step 2 with his victory or with Alice coloring $x$ with a color different from $\phi(z)$. Moreover, Bob wants to be sure Alice colors in Step 2 at most one vertex not in $G_x$, and this vertex, if it exists, is not $y$}.  

 Step 2 is quite similar to Step 1. Bob begins by coloring a $x$-leaf with $\phi(z)$. If Alice colors $y$, she has to use a color different from $\phi(z)$ and Bob wins by Corollary \ref{obs-cp}. If she colors $x$, then Bob goes to Step 3. If she colors any other vertex, then Bob colors another $x$-leaf, and, from this point, this  is similar to Step 1. 

\item[{\bf Step 3.}] We can assume without loss of generality that $P_{y1} \cup C_{y1}$ is uncolored. Since $|\Phi(y)| \ge 2$, Bob wins by Corollary \ref{obs-cp}. 
\end{description}

This ends the proof of Theorem \ref{cactus5}.

\bibliographystyle{plain}
\bibliography{gcn-sparse} 

\end{document}